\numberwithin{equation}{section}
\numberwithin{figure}{section}
\numberwithin{table}{section}
\newcommand{\set}[2]{#1_{(#2)}}
\newcommand{\intii}{\int_{-\infty}^\infty}
\newcommand{\vp}{\varphi}
\newcommand{\ud}{\, \mathrm{d}}
\newcommand{\abs}[1]{\left \vert #1 \right \vert}
\newcommand{\s}{\sigma}
\renewcommand{\a}{\alpha}
\renewcommand{\b}{\beta}
\renewcommand{\t}{\theta}
\newcommand{\g}{\gamma}
\renewcommand{\d}{\Delta}
\renewcommand{\l}{\lambda}
\newcommand{\inv}[1]{{#1}^{-1}}
\newcommand{\iid}{\stackrel {\textrm{iid}}{\sim}}
\newcommand{\cov}{\textrm{cov}}
\newcommand{\rv}[3][1]{#2_{#1},\ldots,#2_{#3}}
\newcommand{\N}{\mathcal N}
\newcommand{\F}{\mathcal F}
\renewcommand{\S}{\mathcal S}
\renewcommand{\|}{\ | \ }
\newcommand{\comp}{\textrm{comp}}
\newcommand{\miss}{\textrm{miss}}
\newcommand{\obs}{\textrm{obs}}
\newcommand{\new}{\textrm{new}}
\newcommand{\old}{\textrm{old}}
\newcommand{\del}[2][]{\frac{\partial^{#1}}{\partial {#2}^{#1}}}
\newcommand{\fdel}[3][]{\frac{\partial^{#1}#3}{\partial{#2}^{#1}}}
\newcommand{\grad}{\nabla}
\newcommand{\diag}{\textrm{diag}}
\DeclareMathAlphabet{\mathpzc}{OT1}{pzc}{m}{it}
\newcommand{\f}{\mathpzc f}
\newtheorem{proposition}{Proposition}
\newtheorem{theorem}{Theorem}
\newcommand{\be}{\begin{align}}
\newcommand{\ee}{\end{align}}
\begin{document}

\title[SDEs with Memory Effects: An Inference Framework]{Statistical Inference for Stochastic Differential Equations with Memory}

\author{Martin Lysy$^1$}
\address{$^1$Department of Statistics and Actuarial Science\\ University of Waterloo}
\email{mlysy@uwaterloo.ca}
\date{July 2, 2013}

\author{Natesh S. Pillai$^2$}
\address{$^2$Department of Statistics\\ Harvard University}
\email{pillai@stat.harvard.edu}

\begin{abstract}
In this paper we construct a framework for doing statistical inference for discretely observed stochastic differential equations (SDEs) where the driving noise has `memory'. Classical SDE models for inference assume the driving noise to be Brownian motion, or ``white noise'', thus implying a Markov assumption.  We focus on the case when the driving noise is a fractional Brownian motion, which is a common continuous-time modeling device for capturing long-range memory.  Since the likelihood is intractable, we proceed via data augmentation, adapting a familiar discretization and missing data approach developed for the white noise case.  In addition to the other SDE parameters, we take the Hurst index to be unknown and estimate it from the data.  Posterior sampling is performed via a Hybrid Monte Carlo algorithm on both the parameters and the missing data simultaneously so as to improve mixing.  We point out that, due to the long-range correlations of the driving noise, careful discretization of the underlying SDE is necessary for valid inference. Our approach can be adapted to other types of rough-path driving processes such as Gaussian ``colored'' noise. The methodology is used to estimate the evolution of the memory parameter in US short-term interest rates. 
\end{abstract}

\maketitle



\section{Introduction}
In this paper we develop a framework based on data augmentation for performing statistical inference for discretely observed stochastic differential equations (SDEs) driven by non-Markovian noise such as fractional Brownian motion.  SDEs are routinely used to model continuous-time phenomena in the natural sciences~\citep{ashby-lieber04, golightly-wilkinson05, gillespie07}, engineering~\citep{pardoux-pignol84, whitmore95, sobczyk01}, and finance~\citep{cox-et-al85, heston93, hull-white01}.  
Consider an SDE with drift $\mu$ and diffusion coefficient $\sigma$ denoted as
\begin{equation}\label{sdewhite}
\ud X_t = \mu(X_t, \theta) \ud t + \s(X_t, \theta) \ud B_t, \quad X_0 \in \mathbb{R},
\end{equation}
where $B_t$ is a standard one-dimensional Brownian motion and $\theta$ is a parameter of interest.  The stochastic process $X_t$ is commonly referred to as a \emph{diffusion process}.  It is a strong Markov process with continuous sample paths~\citep{karlin-taylor81}.  Remarkably, almost all stochastic processes with these two properties satisfy an SDE of the form~\eqref{sdewhite}~\citep[see][for discussion and counter-example]{vankampen82}.

An equation such as~\eqref{sdewhite} specifies the stochastic evolution of $X_t$ on an infinitesimal time scale.  That is, suppose that $X_0$ is given and we wish to simulate the path of $X_t$ on the interval $[0, T]$.  For $\d t = T/N$, setting $\hat X_0 = X_0$, the usual Euler (or Euler-Maruyama) scheme is 
\begin{equation}\label{emapprox}
\hat X_{(n+1)\d t} = \hat X_{n\d t} + \mu(\hat X_{n \d t},\t) \d t + \sigma(\hat X_{n \d t},\t) \d B_n,
\end{equation}
where $\d B_n = (B_{(n+1)\d t} - B_{n \d t}) \iid \N(0, \d t)$. 
The continuous process $\hat X_t^{(\d t)}$ obtained by interpolation converges to $X_t$ as $\d t \to 0$ in an appropriate sense~\citep{maruyama55}.

The above discrete-time approximation provides a fundamental intuition for modeling physical phenomena using continuous-time stochastic processes: $\mu(X_t) \d t$ is the infinitesimal change in mean and $\sigma(X_t) \d t$ is the infinitesimal variance. Most of the existing statistical inference methodology for discretely observed diffusions crucially utilize such discretization schemes:~\citep{pedersen95, liu-sabatti00, eraker01, durham-gallant02, golightly-wilkinson05, chib-et-al10, kou-et-al12} all do so directly;~\citep{roberts-stramer01, golightly-wilkinson08, kalogeropoulos-et-al10, beskos-et-al12} use it indirectly to evaluate the Girsanov change-of-measure.


While the Markov assumption for the observed data -- central to diffusion modeling -- is justifiable in many situations, there is a growing number of applications in which it is not. For instance, the dynamics of financial data~\citep{cheridito03}, subdiffusive proteins~\citep{kou-xie04}, and internet traffic and networks~\citep{stoev-et-al05,wolpert-taqqu05} all exhibit spurious trends and fluctuations which persist over long periods of time.  Such long-range dependence -- or memory -- typically leads to Markov models which are overparametrized, in order to compensate for their rapid decorrelation.

\subsection{SDEs Driven by Fractional Brownian Motion}

In an SDE such as~\eqref{sdewhite}, the Brownian motion $B_t$ can be thought of as the force which drives $X_t$.  While $B_t$ is not differentiable in the usual sense, its derivative $b_t$ (defined using the Fourier transform) can be identified with a collection of iid Normals,
such that
\[
\cov(b_{s+t}, b_s) = \delta(t).
\]
The derivative of Brownian motion is often referred to as ``white noise'', the term being derived from its flat frequency spectrum:
\[
\S(\f) = \intii e^{-2\pi i t \f} \delta(t) \ud t = 1,
\]
the spectrum of white light.  In this paper, we wish to study the solutions of SDEs which are driven by different types of noise:
\begin{equation}\label{sdecol}
\ud X_t = \mu(X_t,\theta) \ud t + \s(X_t,\theta) \ud G_t,
\end{equation}
where $G_t$ is not Brownian motion but rather a non-Markovian process. Whenever it exists, the derivative of $G_t$ is referred to as \emph{colored noise}, by a similar identification of its frequency spectrum with the colors of light~\citep{hanggi-jung95}.

While our framework in this paper is applicable to a wide range of non-Markovian driving noise processes, we focus here on the case where $G_t$ is a fractional Brownian motion,
$G_t = B_t^H$,  with Hurst parameter $0 < H < 1$.  Fractional Brownian motion (fBM) is a continuous mean-zero Gaussian process with covariance
\[
\cov(B_t^H, B_s^H) = \tfrac 1 2 \left(\abs{t}^{2H} + \abs{s}^{2H} - \abs{t-s}^{2H}\right).
\]
The Hurst, or \emph{memory} parameter $H$ indexes the self-similarity of $B_t^H$.  For any $c > 0$, we have
\[
B^H_{ct} \stackrel{d}{=} c^{H} B_t^H.
\]
While fBM itself is non-stationary, its increments
\[
\d B^H_{n} = B^H_{(n+1) \d t} - B^H_{n \d t}
\] 
form a stationary Gaussian process with autocorrelation
\begin{equation}\label{fbmacf}
\cov(\d B_n^H, \d B_{n+k}^H) = \tfrac 1 2 (\d t)^{2H} \left(\abs {k+1}^{2H} + \abs{k-1}^{2H} - 2\abs{k}^{2H} \right).
\end{equation}
For $H = \tfrac 1 2$, the fBM increments are uncorrelated, and $B_t^H = B_t$ reduces to the standard Brownian motion.  For $H \neq \tfrac 1 2$, the increments exhibit a power law decay, in contrast to the exponential decorrelation of stochastic processes with short-range memory.  The increments are positively correlated for $H > \tfrac 1 2$, and negatively for $H < \tfrac 1 2$.  
Being the only continuous, self-similary Gaussian process with stationary increments~\citep{embrechts-maejima02}, fBM occupies a central role in the history of long-range dependence modeling~\citep{mandelbrot-wallis68, cox84, samorodnitsky06}.  SDE such as~\eqref{sdecol} driven by fractional Brownian motion give rise to long-memory processes which need not be Gaussian, while harnessing the statistical power of infinitesimal-time models.


From a modeling perspective, a natural interpretation of the stochastic process defined by~\eqref{sdecol} is as the limit of a discrete-time approximation.  The Euler scheme for discretizing ~\eqref{sdecol} reads
\begin{equation}\label{emcol}
\hat X_{(n+1) \d t} = \hat X_{n \d t} + \mu(\hat X_{n \d t}, \theta) \d t + \s(\hat X_{n \d t}, \theta) \,\d B^H_n.
\end{equation}
  %
 This interpretation at first seems very promising but it turns out that, due to the ``roughness'' of the sample paths of the fractional Brownian motion (which are almost surely not differentiable for any $0 < H < 1$), the Euler scheme in \eqref{emcol} need not converge as the discretization time step $\d t \rightarrow 0$.  For instance, consider the SDE
  \[
 dX_t = X_t \ud B_t^H
 \]
 with initial value $X_0 = 1$.  The exact solution is $X_t  = \exp{(B^H_t)}$, whereas the solution of the Euler scheme at time $t=1$ is given by \cite{neuenkirch-et-al10}
 \[
 \hat X_1 =  \prod_{k =0}^{N-1} (1 + \d B^H_k).
 \]
 Using the above, for sufficiently large $N=1/\d t$, it can be shown that \cite{neuenkirch-et-al10}
 \[
 X_1 - \hat X_1 = \exp(B_1) - \exp\left(B_1 - \frac{1}{2} \sum_{k=0}^{N-1} \abs{\d B_{k}}^2 + \rho_N\right),
 \]
 where $\rho_N \rightarrow 0$ almost surely as $\d t \to 0$, for $H > \tfrac 1 3$.  However, we also know that
 for $H < \tfrac 1 2$,
 \[
 \sum_{k=0}^{N-1} |\d B_{k}|^2 \rightarrow \infty
 \]
 almost surely as $\d t \to 0$. 
Consequently, $\hat X_1$ converges to $0$ almost surely, such that the Euler-Maruyama scheme fails for $H < \tfrac 1 2$.  
On the other hand, for $ H > \tfrac 1 2$, the Euler-Maruyama scheme does converge (see Proposition \ref{eqn:EMsch} below).  Thus, the physical intuition provided by Euler-Maruyama discretization for diffusions need to be refined in the case of SDEs driven by fBM.  
However, a lot of the ideas to follow do carry over from diffusions -- all that is essential in our framework is a numerical scheme which correctly approximates the underlying SDE.

\subsection{Review of Previous Work}

Unlike diffusions, parameter estimation for SDEs driven by fractional Brownian motion is in its infancy.  There are two key challenges: the likelihood is intractable and the data is not Markovian. Some earlier works for parameter estimation include \cite{kleptsyna-et-al00, kleptsyna-lebreton02, prakasarao04, prakasarao05, lebreton-roubaud00, hult03}.  A wealth of information is contained in the book \cite{prakasarao11}. However most of these works deal with continuous data. A few papers study parameter estimation for discretely observed fractional Ornstein-Uhlenbeck processes~\cite{xiao-et-al11, prakasarao11}.

A pioneering work dealing with discrete observations is \cite{tudor-viens07} in which the authors consider a SDE of the form 
\[
X_t = \theta \int_0^t  b(X_s) ds + B_t^H
\]
where $b$ is a known function. It is known that fBM can be represented as an It\=o integral,
\begin{equation} \label{eqn:weinrep}
B_t^H = \int_0^t K_H(t,s)\ud W_s,
\end{equation}
where $W_t$ is a standard Brownian motion and $K_H$ is a kernel (see \cite{tudor-viens07} for details).  This representation leads to a version of Girsanov's theorem \cite{norros99} which can be used for computing the likelihood function.  For continuously observed $X_t$, using this version of Girsanov's theorem, the authors derive the maximum likelihood estimator $\hat{\theta}_\mathrm{con}$ for $\theta$ and show that it is consistent for all $H \in (0,1)$. For discrete data, since the MLE is hard to derive, the authors study a discretized appoximation of $\hat{\theta}_{\mathrm{con}}$ and prove its consistency. 
 
More recently, there have been a couple of different approaches for discrete data which avoid a direct likelihood computation. In \cite{neuenkirch-tindel11}, the authors construct a least squares-type procedure for parameter estimation in SDEs driven by fBM with constant diffusion coefficient (but assume $H > \tfrac 1 2$) and show its consistency.  In \cite{saussereau11}, the author considers a SDE of the form
 \[
  X_t  = x_0 + \int_0^t b(X_s) ds + \sigma B_t^H
 \]
 and constructs a nonparametric kernel estimator for the drift coefficient $b$. In \cite{kaur-et-al11}, the authors construct estimating functions for $\t$ in SDEs with linear drift, $b(x) = a(x) + \t c(x)$.  Finally in \cite{chronopoulou-tindel13}, the authors construct an interesting maximum likelihood type estimator using tools from Malliavin calculus. Curiously, \cite{chronopoulou-tindel13} does not take the non-Markovinanity of the data into account while computing the quasi-likelihood function.  Most of the above papers only deal with point estimation and do not give uncertainty quantification. We also note that almost all of them assume $H$ to be known, and thus do not venture into estimating $H$.

\subsection{An Inference Framework based on Data Augmentation}\label{data-aug}

Due to the non-Markovianity, the likelihood function for discrete data is intractable. We proceed via a data augmentation approach, by ``filling in'' data between two observed points so as to better approximate the likelihood function. There are many equivalent approaches for defining and constructing approximations for the solutions of  non-white noise SDE~\eqref{sdecol} involving mathematical machinery such as Malliavin calculus~\citep{tudor-viens07, hu-yan09}, Wick products~\citep{oksendal09, hu-yan09}, and generalized distributions~\citep{zindewalsh-phillips03}.  Compared to these approaches, data augmentation based on the Euler scheme~\eqref{emcol} -- should it apply -- is conceptually simpler and leads directly to a longstanding inference framework developed for the white noise case~\citep{pedersen95, eraker01, kou-et-al12}.

That is, consider the SDE with constant diffusion
\[
\ud X_t = \mu(X_t, \t)\ud t + \s \ud G_t,
\]
where $G_t$ is any continuous stochastic process.  Let $X = X_\obs = (\rv [0] X N)$ be discrete observations of the SDE at regular time intervals $\d T$.  Both the drift and diffusion functions depend on parameters $\t$ and $\s$ which are to be estimated from the data.

For given level $k$, define the complete data
\[
X_{(k)} = X_\comp = (\rv [k,0] X {k,M_k}),
\]
where $X_{k,n}$ corresponds to an observation of $X_t$ at time $t = n \d t_k$, with $\d t_k = \d T/2^k$.  Thus, $X_{(0)} = X_\obs$, and in general, $X_n = X_{k,n2^k}$.

To the extent that the discrete time appproximation~\eqref{emcol} is correct, the approximate complete data likelihood is
\begin{equation}\label{colll}
\log(\hat L(\t,\s \| X_{(k)})) = \log(f(\d G_{(k)})) - M_k\log(\s),
\end{equation}
where $f(\d G_{(k)})$ is the density of the noise increments,
\begin{align*}
\d G_{(k)} & = (\d G_{k,0}, \ldots, \d G_{k, M_k-1}), \\
\d G_{k,n} & = \tfrac 1 \s \left\{\d X_{k,n} - \mu(X_{k,n},\t) \d t_k\right\},
\end{align*}
with $\d X_{k,n} = X_{k,n+1}-X_{k,n}$.  In the white noise case, the $\d G_{k,n}$ are iid Normals.  In this study, we take $\d G_{(k)} = \d B_{(k)}^H$ to be fBM increments with density $f(\d B_{(k)}^H \| H)$, corresponding to a mean-zero stationary Gaussian process with covariance function given by~\eqref{fbmacf}.

With the approximate likelihood $\hat L(\t, \s, H \| \set X k)$, Bayesian inference can be realized by specifying a prior $\pi(\t, \s, H)$, and sampling from the joint distribution of the parameters $(\t,\s,H)$ and the missing data $X_\miss = X_\comp \setminus X_\obs$:
\begin{equation}\label{comppost}
\hat p_k(X_\miss, \t, H \| X_\obs) \propto \hat L(\t, H \| X_{(k)}) \pi(\t, H).
\end{equation}
Such a strategy is appropriate when the approximate posterior distribution
\[
\hat p_k(\t, \s, H \| X_\obs) = \int \hat p_k(X_\miss, \t, \s, H \| X_\obs) \ud X_\miss
\]
converges to the true SDE posterior $p(\t, \s, H \| X_\obs)$ as $k \to \infty$.  This assumption is repeatedly employed in the white noise literature~\citep{eraker01, golightly-wilkinson08}, and generally seems to hold in practice (despite some theoretical results that would suggest the contrary~\citep{cano-et-al06}).

\subsection{Outline of the Paper}
The remainder of this article is organized as follows.  In Section~\ref{rough-paths}, we use  the Doss-Sussman approach to define a solution of the SDE~\eqref{sdecol} for any continuous process $G_t$.  In the white noise case, this is equivalent to the Stratonovich interpretation of the SDE, which is identical to the more familiar It\=o interpretation~\eqref{emapprox} when the diffusion $\s(x,\t)  \equiv \s$ is constant.  This is precisely the case when approximate inference by way of the natural Euler scheme~\eqref{emcol} for fBM-driven SDEs converges to the result of the true posterior as $k\to \infty$.  Furthermore, a change-of-variables is presented which reduces most SDEs of interest to the constant diffusion case.

In Section~\ref{sampling-alg}, a Markov Chain Monte Carlo (MCMC) algorithm is presented for sampling from the approximate posteriors.  In order to address well-known mixing-time issues carrying over from the white noise case~\citep{roberts-stramer01, kou-et-al12, beskos-et-al12}, we employ a Hybrid Monte Carlo (HMC) sampling strategy.

Sections~\ref{fOUex} and~\ref{fCIRex} illustrate the methodology with two different models: the fractional Ornstein-Uhlenbeck (fOU) process and the fractional Cox-Ingersoll-Ross (fCIR) process.  The latter is to our knowledge defined here for the first time when $H < \tfrac 1 2$.  We use these models to investigate the evolution of the memory parameter in short-term interest rates on US Treasury Bills, from January 1954 to June 2013.  Interestingly, both models lead to very similar conclusions -- significant evidence of positive long-term correlation -- right up to the Global Financial Crisis of 2007-2008.

\section{An SDE Discretization Scheme by Rough-Paths}\label{rough-paths}

For SDEs driven by fractional Brownian motion, the integral form of \eqref{sdecol} becomes
\begin{equation} \label{eqn:integform}
X_t = x_0 + \int_0^t \mu(X_s, \theta) \ud s + \int_0^t \sigma(X_s, \theta) \ud B_s^H.
\end{equation}
A key difficulty for inference with \eqref{eqn:integform} when $\s(X_t,\t)$ is a non-constant function of $X_t$ is to make sense of the integral $\int_0^t \s(X_s) \ud B_s^H$ for various integrands $\s : \mathbb{R} \mapsto \mathbb{R}$. 

We define the solution of the stochastic integral  $\int_0^t \s(X_s) dB_s^H$ using the classical Doss-Sussman transformation \cite{sussmann78,nourdin-simon06}.  The idea behind this transformation is that, since $B^H_t$ has continuous sample paths, the solution of the stochastic equation for each sample path $\{B^H_s, 0 \leq s \leq t \}$ can be obtained by solving a corresponding ordinary differential equation. Thus we have a \emph{pathwise} solution for the stochastic integral, in contrast to the classical It\=o integral for diffusions where the solution is defined as an $L_2$ limit of partial sums.  Thus, the developments presented here are valid not only for integrals involving fBM, but for any stochastic process $G_t$ with continuous paths replacing $B_t^H$ in~\eqref{eqn:integform}.

Indeed, let $f,g$ be functions such that (i) $g$ is continuously differentiable and (ii) $f$ and $g'$ are locally Lipschitz.  Then for the SDE 
\begin{equation} \label{eqn:ysde}
Y_t = y_0 + \int_0^t f(Y_s) \ud s + \int_0^t g(Y_s) \ud B_s^H
\end{equation} 
the Doss-Sussman transformation yields the solution as
\begin{equation} \label{eqn:dssol}
Y_t = \varphi(B_t^H, Z_t),
\end{equation}
where the function $\varphi(x,y) : \mathbb{R}^2 \mapsto \mathbb{R}$ satisfies 
$\tfrac{\partial}{\partial x} \varphi(x,y) = g(\varphi(x,y))$, $\varphi(0,y) = y$ for all $y \in \mathbb{R}$, and the process $Z_t$ solves the random ordinary differential equation
\begin{equation} \label{eqn:adiff}
Z_t = y_0 + \int_0^t a(B^H_s, Z_s) \ud s,
\end{equation}
where 
\begin{equation}
a(x,y) = f(\varphi(x,y)) \exp\left\{ - \int_0^x g'(\varphi(u,y)) \ud u \right\}.
\end{equation}
Under the conditions (i) and (ii) above on $f$ and $g$, the solution \eqref{eqn:dssol} is unique \cite{sussmann78}.

A few remarks are in order. First, the Doss-Sussman solution is valid for any $H \in (0,1)$ due to the continuity of the sample paths of $B^H_t$.  In fact, under conditions (i) and (ii) the solution paths $Y_t$ are continuous themselves.  Second, for $H = \tfrac 1 2$, the solution given by the Doss-Sussman transformation is the same as the classical Stratonovich integral~\cite{sussmann78},
\[
Y_t = Y_0 + \int_0^t f(Y_s) \ud s + \int_0^t g(Y_s) \circ \ud B_t.
\]
Third, for the Doss-Sussman notion of the solution, we have the following change-of-variables:
\begin{equation}\label{ds-cov}
h(Y_t) = h(Y_0) + \int_0^t h'(Y_s)f(Y_s) \ud s + \int_0^t h'(Y_s) g(Y_s) \ud B^H_s.
\end{equation}
This can easily be checked when $h$ is invertible by constructing the solution to the SDE with drift $h'(h^{-1}(x))f(h^{-1}(x))$ and diffusion $h'(h^{-1}(x))g(h^{-1}(x))$.  This result is consistent with the well-known change-of-variables formula for Stratonovich integrals when $H = \tfrac 1 2$, and also with the formula of~\cite{dai-heyde96} for $H > \tfrac 1 2$.  Crucially for what follows, this allows us to reduce many SDEs of interest to having constant diffusion by taking $h(y) = \int 1/g(y) \ud y$.  
Finally, both for statistical intuition and for the inference methodology to follow, we shall require a numerical approximation to the solution of~\eqref{eqn:ysde}.
 
   For a function $f: [0,T] \mapsto \mathbb{R}$,
define the $\alpha$-H\"older norm to be
\begin{equation}
\abs{f}_\alpha = \sup_{s,t \in [0,T], s\neq t} \frac{\abs{f(t) - f(s)}}{\abs{t-s}^\alpha}.
\end{equation}
When $\abs{f}_\alpha < \infty$, we will denote $f \in C^\alpha$.
The sample paths of $B_t^H$ are H\"older continuous for any $\alpha < H$ almost surely.  Thus the sample paths of the fBM become more regular (or less``rough'') as $H$ increases towards $1$. In fact, the stochastic integration theory for $H > \tfrac 1 2$ is different from the theory for $H < \tfrac 1 2$. 

\subsection{Integrals with $H > \tfrac 1 2$}  For functions $f,g \in C^\gamma$ with $\gamma > \tfrac 1 2$, the Riemann sum $ \sum_\pi f_i (g_{i+1} - g_i)$ converges as the partition size $|\pi| \rightarrow 0$ \cite{young36}.  Thus for such $f,g$ we may define the Riemann-Stieljes integral
\begin{equation} \label{eqn:intyou}
\int f \ud g = \lim_{\abs{\pi} \rightarrow 0} \sum_\pi f_i (g_{i+1} - g_i).
\end{equation}
The above integral is called the Young integral \cite{young36}.  The solution using the Doss-Sussman transformation for stochastic integrals $\int_0^t g(X_s) \ud B_s^H$ for $H > \tfrac 1 2$ coincides with that of the Young integrals \cite{nourdin08}.

The advantage of the Young integral is that it gives a way to discretize the stochastic integral for numerical simulation. This is also central to our inference strategy.
In the SDE~\eqref{eqn:ysde}, let $f,g$ be smooth and bounded functions, and consider the simple Euler-Maruyama discretization scheme~\eqref{emcol}. For $H > \tfrac 1 2$, we have the following \emph{pathwise} convergence result \cite[Proposition 4.2] {chronopoulou-tindel13}(also see
 \cite{deya-et-al12, neuenkirch-nourdin07, mishura-shevchenko08} for similar results):
\begin{proposition} \label{eqn:EMsch}
Fix any $T > 0$, and let $\hat Y_t^{(\d t)}$ be a continuous-time interpolation of an Euler-Maruyama approximation as defined in~\eqref{emcol}. Then for any $\rho > 0$, there exists a random variable $C_T$ with
all $L_p$ moments such that
\begin{equation}
\sup_{t \in [0,T]} |\hat{Y}^{(\d t)}_t - Y_t| = C_T \cdot \d t^{2H - 1 - \rho},
\end{equation}
where $Y_t$ is given by \eqref{eqn:ysde}.
\end{proposition}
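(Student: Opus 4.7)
The plan is to exploit the pathwise nature of Young integration theory for $H > \tfrac{1}{2}$: for almost every sample path of $B^H$, both the SDE and its Euler approximation reduce to deterministic objects that can be analyzed purely in terms of H\"older seminorms. The target rate $2H-1-\rho$ naturally emerges as $2\alpha-1$ for $\alpha$ close to $H$, because the local (one-step) error of a Young-integral Euler scheme on an interval of length $\d t$ is of order $(\d t)^{2\alpha}$, and summing over $T/\d t$ intervals loses one power of $\d t$.

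First I would fix $\rho>0$, choose $\alpha\in(\tfrac{1}{2},H)$ with $\alpha=H-\rho/2$, and invoke Kolmogorov's continuity theorem to ensure that the H\"older seminorm $\abs{B^H}_\alpha$ on $[0,T]$ is almost surely finite and has all $L_p$ moments. Next I would rewrite the Euler scheme in continuous-time integral form,
\[
\hat Y^{(\d t)}_t \;=\; y_0 + \int_0^t f\bigl(\hat Y^{(\d t)}_{\eta(s)}\bigr)\,\ud s + \int_0^t g\bigl(\hat Y^{(\d t)}_{\eta(s)}\bigr)\,\ud B_s^H,
\]
where $\eta(s) = \lfloor s/\d t\rfloor\,\d t$, and show using boundedness of $f,g$ that $\hat Y^{(\d t)}\in C^\alpha$ uniformly in $\d t$, with H\"older seminorm bounded by a polynomial in $\abs{B^H}_\alpha$.

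Subtracting the two integral equations and splitting each integrand via
\[
[\phi(Y_s)-\phi(\hat Y^{(\d t)}_{\eta(s)})] \;=\; [\phi(Y_s)-\phi(\hat Y^{(\d t)}_s)] + [\phi(\hat Y^{(\d t)}_s)-\phi(\hat Y^{(\d t)}_{\eta(s)})]
\]
for $\phi\in\{f,g\}$, the first term is handled by local Lipschitz continuity of $f,g$ and their derivatives, while the second---the ``freezing error''---carries the convergence rate. To bound the stochastic contribution of the freezing error I would invoke Young's inequality (or equivalently the sewing lemma): since $g(\hat Y^{(\d t)}_\cdot)-g(\hat Y^{(\d t)}_{\eta(\cdot)})$ is $\alpha$-H\"older and vanishes on grid points, its Young integral against $B^H$ over a grid cell of length $\d t$ is controlled by a multiple of $\abs{B^H}_\alpha^2\,(\d t)^{2\alpha}$; summing over $T/\d t$ cells gives a global contribution of order $\abs{B^H}_\alpha^2\,T\,(\d t)^{2\alpha-1}$.

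Finally I would close the argument by a Gronwall-type estimate in the $C^\alpha$ norm applied to the error process, using that the Lipschitz terms can be absorbed at the cost of a factor $\exp(C\,T\,\abs{B^H}_\alpha)$. This yields $\sup_{t\in[0,T]}\abs{\hat Y^{(\d t)}_t - Y_t}\le C_T\,(\d t)^{2\alpha-1}=C_T\,(\d t)^{2H-1-\rho}$ with $C_T$ a polynomial-exponential expression in $\abs{B^H}_\alpha$, hence possessing all $L_p$ moments by Fernique-type tail bounds for Gaussian H\"older norms. The main obstacle, in my view, is executing the Gronwall step in H\"older rather than supremum norm---naive supremum-norm bounds are too weak to feed back into the Young-integral estimate for the next iterate, so one must track both the $\alpha$-H\"older seminorm and the supremum norm of the error simultaneously; this bookkeeping is the only genuinely technical piece, and is by now standard in the rough-path literature.
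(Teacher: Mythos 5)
The paper does not actually prove this proposition: it is imported directly from \cite{chronopoulou-tindel13} (Proposition 4.2), with \cite{deya-et-al12, neuenkirch-nourdin07, mishura-shevchenko08} cited for closely related statements, so there is no in-paper argument to compare against. Your sketch reconstructs what is essentially the standard proof underlying those references: pathwise reduction via Young integration for $\alpha$-H\"older paths with $\alpha\in(\tfrac 1 2,H)$, a one-step ``freezing'' error of order $(\d t)^{2\alpha}$ from the Young--Loeve estimate, summation over $T/\d t$ cells to get $(\d t)^{2\alpha-1}$, and a H\"older-norm Gronwall argument; the outline is sound and the rate bookkeeping is right. Two points would need sharpening in a full write-up. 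First, the uniform-in-$\d t$ H\"older bound on $\hat Y^{(\d t)}$ does not follow from boundedness of $f,g$ alone: the Riemann sum $\sum_n g(\hat Y_{n\d t})\,\d B^H_n$ is controlled by the Young estimate only in terms of the H\"older seminorm of $n\mapsto g(\hat Y_{n\d t})$, which is precisely the quantity being bounded, so one needs the self-bounding/bootstrap argument on subintervals where the driver's H\"older contribution acts as a contraction --- you correctly flag this as the main technical piece, but it is where the induction must actually be run. Second, the constant produced by that interval-splitting is of order $\exp\bigl(C\abs{B^H}_\alpha^{1/\alpha}\bigr)$ rather than $\exp\bigl(CT\abs{B^H}_\alpha\bigr)$, and the claim that $C_T$ has all $L_p$ moments rests on $1/\alpha<2$ combined with Fernique's Gaussian tail bound for $\abs{B^H}_\alpha$; in other words, the integrability of $C_T$ is exactly where the hypothesis $H>\tfrac 1 2$ (hence $\alpha>\tfrac 1 2$) enters, and this deserves to be stated explicitly rather than absorbed into ``polynomial-exponential expression.''
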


\subsection{Integrals with $H < \tfrac 1 2$}

For $H < \tfrac 1 2$, the stochastic integral cannot be given a pathwise definition using the Riemann-Stieljes integral.  However, a pathwise definition for stochastic integrals can still be achieved using higher order discretization schemes~\cite{neuenkirch-et-al10, deya-et-al12}. 
We do not venture into the details of this here but rather note that, already for $H = \tfrac 1 2$, the Stratanovich integral does have a pathwise definition, since it is the same as the solution obtained from the Doss-Sussman equations.   

\subsection{Stochastic Integration for Constant Diffusion}
As mentioned before, the Euler-Maruyama scheme \eqref{emcol} will generally not converge for $H < \tfrac 1 2$.  However, a notable exception is when the diffusion coefficient is constant,
\[
Y_t = Y_0 + \int_0^t \mu(Y_s) \ud s + \s B_t^H.
\]
Indeed, we have an exact discretization of the form
\[
\hat Y_{n+1} = \hat Y_n + \int_{n \d t}^{(n+1)\d t} \mu(Y_s) \ud s + \s \d B_n^H,
\]
and since $Y_s$ is continuous, the integral can be approximated in the usual
way for small $\d t$:
\[
\int_{n \d t}^{(n+1)\d t} \mu(Y_s) \ud s \approx \mu(Y_n) \d t.
\]
The following is a precise statement of this result.  For fixed $T > 0$, let $\d t = T/N$ and $\hat Y_{n \d t}$be given by the Euler scheme~\eqref{emcol}.  Let $\hat Y_t^{(\d t)}$ be a continuous-time stochastic process obtained from linear interpolation of the
Euler scheme:
\[
\hat Y_t^{(\d t)} = \tfrac{(t-n\d t)}{\d t} \hat Y_{(n+1)\d t} + \left(1-\tfrac{t-n\d t}{\d t}\right) \hat Y_{n \d t}, \quad n \d t \le t \le (n+1)\d t.
\]
\begin{theorem} \label{thm:linint} Let $\mu: \mathbb{R} \mapsto \mathbb{R}$ have a globally bounded derivative. 
Then
\[
\sup_{t \in [0,T]} \abs{\hat Y_t^{(\d t)} - Y_t}  \to 0
\]
pathwise as $\d t \to 0$.
\end{theorem}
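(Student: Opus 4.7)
My strategy is a Doss--Sussman-style reduction that converts the SDE into a random ODE and then invokes classical convergence theory for the forward Euler method on such ODEs. The essential point is that a constant diffusion coefficient renders the roughness of the fBM path irrelevant to the convergence proof.

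The first step is to introduce the auxiliary process $V_t := Y_t - \s B_t^H$. Substituting into the integral equation for $Y_t$ eliminates the stochastic integral entirely and yields the random ODE
\[
V_t = Y_0 + \int_0^t \mu(V_s + \s B_s^H)\, \ud s.
\]
For each fixed sample path of $B^H$, this is an ODE whose right-hand side is Lipschitz in $V$ with constant $L := \sup_x |\mu'(x)|$ (finite by hypothesis) and continuous in $s$ since $B_s^H$ is pathwise continuous. A Gr\"onwall estimate gives existence, uniqueness, and a uniform bound $\sup_{t \le T} |V_t| \le R$ for some (path-dependent) $R$.

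The second step is to observe that the Euler scheme for $Y$ translates into the classical Euler scheme for $V$. Subtracting $\s B^H_{(n+1)\d t}$ from both sides of the update~\eqref{emcol}, the grid values $\hat V_{n\d t} := \hat Y_{n\d t} - \s B^H_{n\d t}$ satisfy
\[
\hat V_{(n+1)\d t} = \hat V_{n\d t} + \mu(\hat V_{n\d t} + \s B^H_{n\d t})\, \d t,
\]
which is precisely the explicit Euler discretization of the ODE just derived. A discrete Gr\"onwall inequality first gives a uniform-in-$\d t$ bound on $\max_n |\hat V_{n\d t}|$, and a standard Euler convergence argument (local truncation error controlled by the modulus of continuity of $s \mapsto \mu(V_s + \s B_s^H)$, propagated via the Lipschitz bound on $\mu$) then delivers
\[
\max_{0 \le n \le N} |\hat V_{n\d t} - V_{n\d t}| \longrightarrow 0
\]
as $\d t \to 0$.

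To conclude, I pass from grid convergence to uniform convergence of the linear interpolation. For $t \in [n\d t, (n+1)\d t]$, a direct substitution yields
\[
\hat Y_t^{(\d t)} - Y_t = (\hat V_{n\d t} - V_{n\d t}) + (V_{n\d t} - V_t) + \s(B^H_{n\d t} - B^H_t) + \tfrac{t - n\d t}{\d t}\left[\mu(\hat Y_{n\d t})\, \d t + \s\, \d B^H_n\right].
\]
The first term tends to $0$ by the second step, the second is $O(\d t)$ since $V$ is Lipschitz on $[0,T]$ (its derivative equals the bounded $\mu(V_t + \s B_t^H)$), and the remaining terms are bounded by $(1 + |\s|)\,\w_{B^H}(\d t) + \d t \cdot \sup_n |\mu(\hat Y_{n\d t})|$, where $\w_{B^H}$ is the path's modulus of continuity on $[0,T]$; both contributions vanish as $\d t \to 0$. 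The only technical care required is to keep $\hat Y_{n\d t}$ uniformly bounded independent of $\d t$ so that $\mu(\hat Y_{n\d t})$ is controlled, which is handled by the discrete Gr\"onwall step. Beyond that there is no real obstacle: the constant diffusion coefficient completely bypasses the delicate rough-path issues that plague the nonconstant case (in particular for $H < \tfrac 1 2$, where the Young integral is unavailable), and the argument uses only pathwise continuity of $B^H$, which holds for every $H \in (0,1)$.
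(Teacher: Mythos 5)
Your proof is correct and fills in precisely the ``standard argument'' that the paper invokes and omits: the constant-diffusion Doss--Sussman reduction $V_t = Y_t - \s B^H_t$ turns the SDE into a pathwise ODE with a Lipschitz, continuous-in-time right-hand side, the Euler scheme for $Y$ becomes the classical explicit Euler scheme for $V$, and Gr\"onwall (in place of the paper's equivalent appeal to the contraction mapping theorem) gives grid convergence, after which the interpolation error is absorbed by the modulus of continuity of the fBM path. The decomposition of the interpolation error and the remark that the argument needs only pathwise continuity of $B^H$ (hence works for every $H \in (0,1)$) are both accurate, so this is a valid write-up of the omitted proof rather than a different route.
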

\begin{proof}
The proof is a standard argument based on the contraction mapping theorem and thus we omit it (also see \cite{sussmann78}). 
\end{proof}

 

\section{MCMC Sampling of the Complete Data Posterior Distribution}\label{sampling-alg}

In conjunction with the change-of-variables formula~\eqref{ds-cov}, Theorem~\ref{thm:linint} extends the natural interpration of the SDE by the Euler scheme to the non-white noise case.  Furthermore, the pathwise convergence result implies that approximate inference by way of Euler-based data augmentation as described in Section~\ref{data-aug} converges to the correct result.

Suppose that~\eqref{ds-cov} allows the SDE to be transformed to constant variance,
\[
\ud X_t = \mu(X_t, \t) \ud t + \s \ud B_t^H.
\]
For observed data $X_\obs = (\rv [0] X N)$ sampling from the level-$k$ complete data posterior $p_k(X_\miss, \t, \s, H \| X_\obs)$ in~\eqref{comppost} can be a nontrivial task.  For instance, consider a Gibbs sampler which updates each component of $(X_\miss, \t, \s, H)$ individually, conditioned on all other variables.  For the white noise SDE with $H = \tfrac 1 2$, the conditional distribution of a missing data point $X_{k,n}$ depends only on its two neighbors,
\begin{align*}
p_k(X_{k,n} \| \t, \s, X_{(k)}\setminus X_{k,n}) & = p_k(X_{k,n} \| \t, \s, X_{k,n-1}, X_{k,n+1}) \\
& \propto p(X_{k,n+1} \| X_{k,n}, \t, \s) \cdot p(X_{k,n} \| X_{k,n-1}, \t, \s).
\end{align*}
While this density does not correspond to a known distribution, it can easily be evaluated, and indeed has been repeatedly used in an effective Metropolis-within-Gibbs sampling algorithm~\citep{eraker01, golightly-wilkinson05, kou-et-al12}.  In contrast, the corresponding conditional draws of $X_{k,n}$ for SDEs driven by fBM with $H \neq \tfrac 1 2$ depend on \emph{all} the complete data points $X_{k,j}$, $j \neq n$.  For high resolution $k$ this can incur a considerably higher computational cost for density evaluation.

Evaluation costs aside, it has often been pointed out in the diffusion literature that conditional draws of the parameters $p_k(\t,\s \| X_{(k)})$ and the missing data $p_k(X_\miss \| X_\obs, \t,\s)$ become increasingly correlated as $k \to \infty$~\citep{roberts-stramer01,kalogeropoulos-et-al10,kou-et-al12,beskos-et-al12}.  The upshot of this is that even an idealized Gibbs sampler which alternates between perfect draws of $X_\miss$ and $(\t,\s)$ becomes arbitrarily inefficient.  One way to overcome this problem is by adding measurement error to the model, in which case a non-centered parametrization is readily available~\citep{chib-et-al10, golightly-wilkinson08}. Within the error-free model, \cite{kou-et-al12} have implemented joint, independent proposals for $p_k(X_\miss, \t \| X_\obs)$ using parallel sampling techniques.  Within each chain, however, local MCMC updates are required.  Here, we shall consider joint updates of a different nature.

\subsection{A Basic HMC Algorithm}

Hybrid Monte Carlo (HMC) is a popular alternative to local updating strategies when Gibbs samplers and Vanilla Monte Carlo are inefficient~\citep{duane-et-al87, ishwaran99, neal10, girolami-calderhead11}, and has been successfully employed in the white noise SDE literature~\citep{beskos-et-al12}.  HMC uses Hamiltonian dynamics to construct global proposal distributions and thus does not have a diffusive behavior as that of random walk based proposals. A simple HMC algorithm taken from~\cite{liu01} is as follows.  Suppose we wish to sample a $D$-dimensional random variable $x = (\rv x D)$ with distribution $p(x) \propto \exp(-\Omega(x))$.  To do this, we first specify a mass vector $m = (\rv m D) > 0$, a small time increment $\d t$, and a step number $L$.  Then, given a previous MCMC value $x^\old$, an HMC proposal is generated and accepted by the following steps:
\begin{enumerate}
	\item Let $x_0 = x^\old$, and $p_0 \sim \N_D(0, \diag(m))$. Denote the density of this multivariate Normal distribution by $\varphi(\cdot \| m)$.
	\item For $1 \le n \le L$, calculate the deterministic recursion
	\begin{align*}
	p_{n}^{(1/2)} & = p_{n-1} - \grad \Omega(x_{n-1}) \d t/2 \\
	x_n & = x_{n-1} + m^{-1} p_{n}^{(1/2)} \d t \\
	p_n & = p_{n}^{(1/2)} - \grad \Omega(x_n) \d t/2.
	\end{align*}
	Let $x^\new = x_L$, $p^\old = p_0$ and $p^\new = p_L$.  This is the so-called ``leapfrog'' algorithm for calculating $(x, p)$~\citep{duane-et-al87}.
	\item Accept the HMC proposal $x^\new$ with Metropolis-Hastings probability
	\[
	\min\left\{1, \frac{\exp(-\Omega(x^\new)) \varphi(p^\new \| m)}{\exp(-\Omega(x^\old)) \varphi(p^\old \| m)}\right\}.
	\]
\end{enumerate}

\subsection{Efficient Derivative Evaluations for fBM Increments}

In the context of fBM-driven SDEs, the HMC algorithm above can be used to sample any subset of the posterior random variables in
\[
p_k(X_\miss, \t, \s, H \| X_\obs) \propto \exp\{-\Omega(X_{(k)}, \t, \s, H)\},
\]
where
\[
\Omega(\set X k, \t, \s, H) = -\log(f(\d B^H_{(k)} \| H)) + M_k \log(\s) - \log(\pi(\t,\s,H)).
\]
Since $f(\d B_{(k)}^H \| H)$ is a zero-mean Gaussian density, the Cholesky decomposition of its variance matrix leads to a factorization of the form
\begin{equation}\label{gaussfac}
\begin{split}
\log(f(\d B^H_{(k)} \| H)) & = -\frac 1 2 \sum_{j = 0}^{M_k-1} \frac{\left(\sum_{n = 0}^j b_{j,n} \d B^H_{n}\right)^2}{v_j} - \frac 1 2 \sum_{j=0}^{M_k-1} \log(v_j) \\
& = - \frac 1 2 \sum_{j = 0}^{M_k - 1} r_j^2/v_j + \log(v_j),
\end{split}
\end{equation}
where $b_{j,n} = b_{j,n}(H)$, $v_j = v_j(H)$, $b_{j,j} = 1$ and the subscript $k$ has been omitted to simplify notation.  Using this representation, the partial derivatives of $\Omega(\set X k, \t, \s, H)$ with respect to $X_n$, $\t_i$, and $\s$ are:
\begin{equation}\label{uprime}
\begin{split}
\fdel{X_{n}}{\Omega(X_{(k)}, \t, \s, H)} & = \sum_{j=n-1}^{M_k-1}\frac{r_j}{v_j} \left[b_{j,n} \frac{1 + \mu_x(X_n,\t)\d t_k}{\s} - \frac{b_{j,n-1}}{\s} \right] \\
\fdel{\t_i}{\Omega(X_{(k)}, \t, \s, H)} & = \frac{\pi_i(\t,\s,H)}{\pi(\t, \s, H)} + \sum_{j=0}^{M_k-1} \left[\frac{r_j}{v_j} \sum_{n=0}^j b_{j,n} \frac{\mu_i(X_n,\t)\d t_k}{\s}\right] \\
\fdel{\s}{\Omega(X_{(k)}, \t, \s, H)} & = \frac{\pi_\s(\t,\s,H)}{\pi(\t, \s, H)} - \frac{M_k}{\s} \\
& \phantom{= \ } + \sum_{j=0}^{M_k-1} \left[\frac{r_j}{v_j} \sum_{n=0}^j b_{j,n} \frac{\d X_n - \mu(X_n,\t)\d t_k}{\s^2}\right],
\end{split}
\end{equation}
with
\begin{align*}
\mu_x(x,\t) & = \del x \mu(x,\t), &  \pi_i(\t,s,H) & = \del {\t_i} \pi(\t,\s,H), \\
\mu_i(x,\t) & = \del {\t_i} \mu(x,\t), & \pi_\s(\t,\s,H) & = \del \s \pi(\t,\s,H).
\end{align*}
The advantage of using the factorization of~\eqref{gaussfac} to write the derivatives in~\eqref{uprime} is that for the stationary fBM increments, the Cholesky coefficients $b_{j,n}$ and $v_j$ can be calculated in $O(M_k^2)$ operations using the well-known Durbin-Levinson algorithm~\citep{levinson47, durbin60, brockwell-davis09}.  This is a considerable acceleration over the usual scaling of $O(M_k^3)$ for arbitrary variance matrices, and consequently for the matrix inversion required to compute the inner product in $f(\set {\d B^H} k \| H)$.

While the partial derivative $\del H \Omega(X_{(k)},\t,\s, H)$ can be evaluated analytically, the procedure is cumbersome and there doesn't seem to be a way for the Durbin-Levinson algorithm to accelerate the necessary calculations.  In the examples below, we have opted for numerical evaluation of $\del H \Omega(X_{(k)},\t,H)$ using the standard second-order difference method.


\section{Numerical Example: The Fractional Ornstein-Uhlenbeck Process}\label{fOUex}

The fractional Ornstein-Uhlenbeck (fOU) process $X_t$ satisfies the SDE
\begin{equation}\label{fou}
\ud X_t = -\gamma(X_t - \mu) \ud t + \s \ud B^H_t.
\end{equation}
It is a stationary Gaussian process with mean $E[X_t] = \mu$; one of the very few non-white noise SDEs for which analytical calculations are possible.  In fact, it can be shown~\citep[Remark 2.4]{cheridito-et-al03} that $X_t$ has autocorrelation function
\begin{equation}\label{foucov}
\cov(X_s, X_{s+t}) = \s^2 \Gamma(2H+1)\sin(\pi H) \intii e^{2\pi i t \xi} \frac{\abs{2\pi \xi}^{1-2H}}{\g^2 + (2\pi\xi)^2} \ud \xi
\end{equation}
for any $0 < H < 1$.  While the inverse Fourier transform in~\eqref{foucov} can be approximated numerically, various investigations on our behalf revealed that such a calculation is highly susceptible to roundoff error, and should be carefully monitored (Appendix~\ref{fOU-acfex}).  On the other hand, the complete data likelihood at resolution level $k$ is available directly:
\[
\log(\hat L(\g, \mu, \s, H \| \set X k)) = -\frac 1 2 \left[(\d B^H_{(k)})' V^{-1} (\d B^H_{(k)}) + \log(\abs{V}) + M_k \log(\s^2)\right],
\]
where $\d B^H_{k,n} = \tfrac 1 \s\left\{\d X_{k,n} + \g(X_{k,n}-\mu)\ud t_k\right\}$, and $V$ is a Toeplitz matrix with 
\begin{equation}\label{fgn-acf}
V_{ij} = \frac {(\d t_k)^{2H}} 2 \left(\abs{i-j+1}^{2H} + \abs{i-j-1}^{2H} - 2 \abs{i-j}^{2H}\right).
\end{equation}
Since the fBM residuals $\d B^H_{(k)}$ are linear functions of the complete data, the $\set X k$ themselves are multivariate Normal given $\t = (\g,\mu)$, $\s$, and $H$.  Thus, the missing data $X_\miss$ can be integrated out to yield the marginal Euler-Maruyama posterior $\hat p_k(\t, \s, H \| X_\obs)$ directly (Appendix~\ref{fOU-acfint}).

\subsection{Simulation Experiment}

Using the analytical autocorrelation~\eqref{foucov}, $N+1 = 301$ observations $X_\obs = (\rv [0] X {300})$ were simulated from the fOU process with parameters $\g = 1$, $\mu = 0$, $\s = 1$, $H = .75$, and interobservation time $\d t = 1$.  These data are displayed in Figure~\ref{fou-data}.
\begin{figure}[!h]
	\centering
		\includegraphics[width=1.00\textwidth]{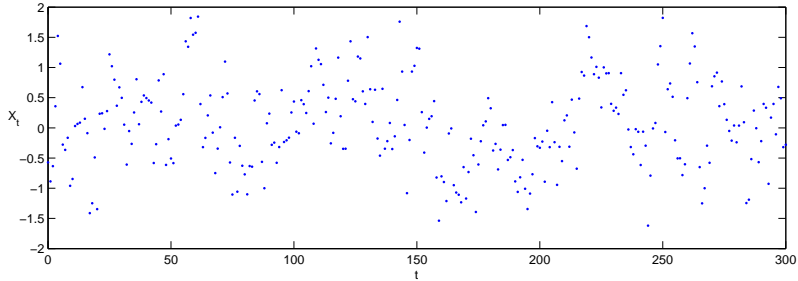}
	\caption{Observations of an fOU process with interobservation time $\d t = 1$ and parameters $\g = 1$, $\mu = 0$, $\s = 1$, and $H = .75$.}
	\label{fou-data}
\end{figure}
Using the standard noninformative prior
\[
\pi(\gamma, \mu,\sigma,H) \propto \g/\s,
\]
posterior inference for the parameters was conducted with Euler-Maruyama approximations at levels $k = 0,1,2,3$.  For these Gaussian models, both $\sigma$ and $\mu$ were integrated out using a method of least-squares.  To illustrate the procedure, note that for $k = 0$, the Euler-Maruyama likelihood function can be framed in a regression context by writing
\begin{equation}\label{fouls}
y_i = \eta \d t + \epsilon_i,
\end{equation}
where $\eta = \gamma\mu$, $y_i = X_{i+1} +(\g \d t - 1)X_i$, and we have correlated errors
\begin{equation}\label{eps}
(\rv [0] \epsilon {N-1}) \sim \N(0, \s^2 V),
\end{equation}
with $V$ given by~\eqref{fgn-acf}.   The noninformative prior becomes $\pi(\gamma,\eta,\s,H)\propto 1/\s$, for which the marginal posterior distribution $\hat p_0(\g, H \| X_\obs)$ can be calculated directly (Appendix~\ref{least-squares}).  

Using this technique, Euler-Maruyama marginal densities of $\g$ and $H$ computed without Monte Carlo error are displayed in Figure~\ref{fou-em-post}.
\begin{figure}[!h]
	\centering
		\includegraphics[width=1.00\textwidth]{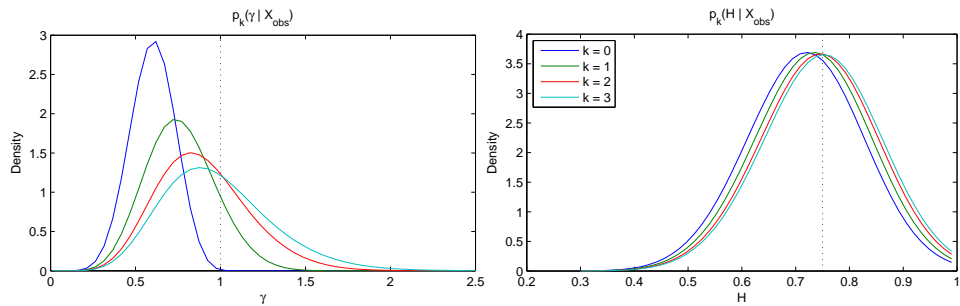}
	\caption{Euler-Maruyama posteriors of $\g$ and $H$.}
	\label{fou-em-post}
\end{figure}
For this particular dataset, the Euler-Maruyama posteriors at level $k=3$ are very close to those of level $k = 2$, suggesting that there is little difference between level $k = 3$ and the true SDE posterior.

To compare with these analytic results, two MCMC samplers were run at resolution level $k = 3$.  The first uses a Gibbs sampling approach, drawing from $\hat p_k(\t,\s,H \| X_\comp)$ using componentwise Metropolis-within-Gibbs, and from $\hat p_k(X_\miss \| X_\obs, \t,\s,H)$ using the HMC algorithm described in Section~\ref{sampling-alg}.  The second MCMC sampler uses HMC to update all the random variables in $p_k(X_\miss, \t,\s,H \| X_\obs)$.  Because the partial derivative $\del H \Omega(X_\miss, \t, \s, H)$ must be computed numerically, the second MCMC sampler takes roughly twice as long per iteration as the first.  In order to establish a basis of comparison, the two MCMC samplers were run for 1,000,000 and 500,000 iterations respectively.

The posterior distributions of $\gamma$ and $H$ for each sampler are displayed alongside the analytic posteriors in Figure~\ref{fou-mcmc-post}.
\begin{figure}[!h]
	\centering
		\includegraphics[width=1.00\textwidth]{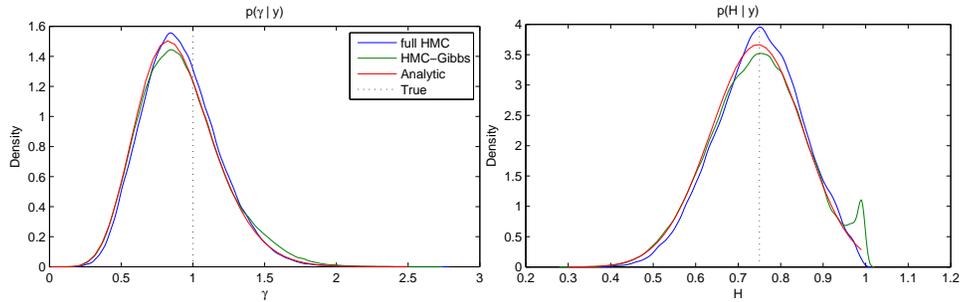}
	\caption{MCMC posteriors of $\g$ and $H$ for $k = 3$.}
	\label{fou-mcmc-post}
\end{figure}
The autocorrelations of these samplers are shown in Figure~\ref{fou-mcmc-acf}, where for comparison, the output of the Gibbs sampler (which had twice as many MCMC iterations) was thinned by a factor of two.
\begin{figure}[!h]
	\centering
		\includegraphics[width=1.00\textwidth]{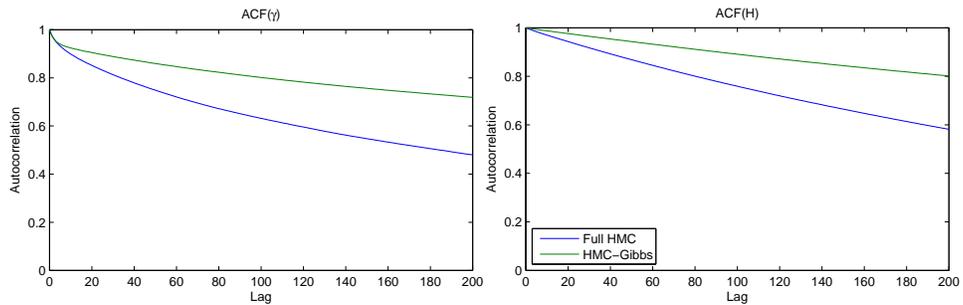}
	\caption{MCMC autocorrelations of $\g$ and $H$ for $k = 3$.}
	\label{fou-mcmc-acf}
\end{figure}
The full HMC sampler is appreciably more efficient that the Gibbs sampler, particularly in its ability to escape a deep local mode arising when $H \approx 1$.  Further evidence of this mode is given by the trace plots of $H$ in Figure~\ref{fou-mcmc-trace}.
\begin{figure}[!h]
	\centering
		\includegraphics[width=1.00\textwidth]{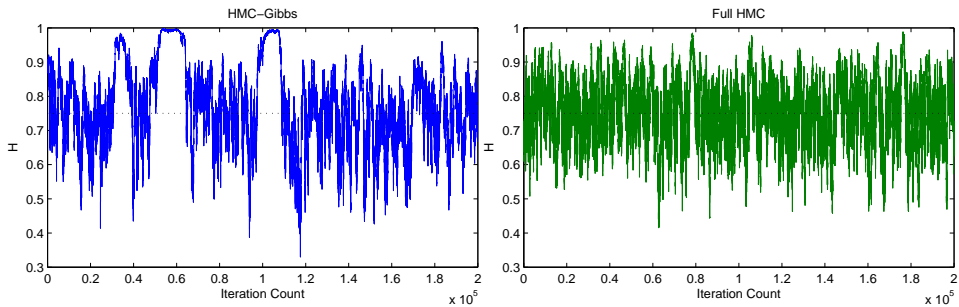}
	\caption{MCMC trace plots of $H$ for $k = 3$.}
	\label{fou-mcmc-trace}
\end{figure}


\section{Application: The Fractional CIR Model for Modeling Short-Term Interest Rates}\label{fCIRex}

The fractional Cox-Ingersol-Ross (fCIR) process is given by the SDE
\begin{equation}\label{fcir}
\ud X_t = -\g(X_t - \mu) \ud t + \s X_t^{1/2} \ud B_t^H.
\end{equation}
The original CIR process~\citep{cox-et-al85} with $H = \tfrac 1 2$ is a popular model for financial assets, admitting a closed-form solution for the transition density as a non-central chi-squared distribution~\citep{kou-et-al12}.  When $H \neq \tfrac 1 2$, this process is no longer analytically tractable.  Some work has been done to extend the CIR process for $H > \tfrac 1 2$ for the special case of $\mu = 0$~\citep{fink-kluppelberg11}.  To our knowledge, we present the first systematic treatment of the fCIR process for unrestricted parametrizations, including any $0 < H < 1$.

The transformation to unit diffusion for the fCIR process is $Y_t = 2X_t^{1/2}$, which yields the SDE
\[
\ud Y_t = (\b/Y_t - \tfrac 1 2 \g Y_t) \ud t + \s \ud B_t^H,
\]
with $\b = 2\g\mu - \tfrac 1 2\s^2$.  We use this model to analyze the long-term memory of 3-Month US Treasury Bills, recorded daily between January 1954 and June 2013 (Figure~\ref{fcir-tbill}a).  This period of 15,508 days was divided into 1000 overlapping segments of $N+1 = 5\times 252 = 1260$ days, or about 5 years (by convention there are 252 trading days per year).  For each of these 1000 datasets we computed a posterior distribution of $H$.
\begin{figure}[!h]
	\centering
		\includegraphics[width=1.00\textwidth]{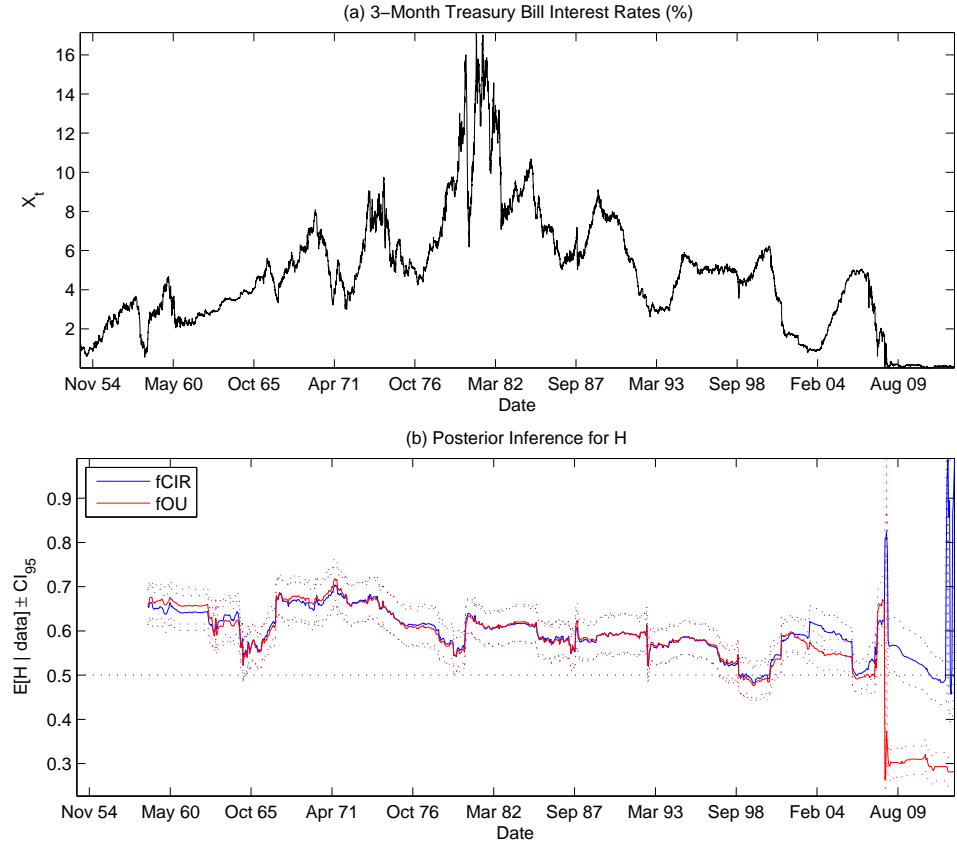}
	\caption{Posterior predictions for the Hurst parameter for 3-month US treasury bills.}
	\label{fcir-tbill}
\end{figure}

Since there is no closed-form likelihood for the parameters of the fCIR process the Euler-Maruyama approximations are used instead.  While these incur a rather substantial computational burden once the missing data is introduced ($k > 0$), preliminary inference without missing data ($k = 0$) can be efficiently accomplished by a variant of the least-squares method described in Section~\ref{fOUex}.  In this case, the regression model becomes
\begin{equation}\label{fcirls}
y_i = -\tfrac 1 2 \g Y_i \d t + \beta \d t/Y_i + \epsilon_i,
\end{equation}
with $y_i = Y_{i+1}-Y_i$, $\d t = 1/252$ (time units of years), and $\epsilon_i$ as in~\eqref{eps}.  For the noninformative prior
\[
\pi(\g,\b,\s,H) \propto 1/\s,
\]
parameters $\g$, $\b$, and $\s$ can be integrated out to yield the marginal posterior $\hat p_0(H \| X_\obs)$ directly.  However, the regression model assumes that $\g, \b \in \mathbb R$, while the fCIR model imposes the restrictions $\g > 0$ and $\b + \tfrac 1 2 \s^2 > 0$ (which corresponds to $\g, \mu > 0$ in the original parametrization).  To comply with these restrictions, the joint posterior of all parameters $\t = (\g, \b)$, $\s$, $H$ is simulated from the unrestricted regression model, using the marginal distribution of $H$ and the analytic conditional distribution of the remaining parameters:
\begin{align*}
\s^2 \| H, X_\obs & \sim \textrm{Inv-Gamma}(a, b) \\
(\g, \b) \| \s^2, H, X_\obs & \sim \N_2(\lambda, \s^2 \Omega),
\end{align*}
where the values of $a$, $b$, $\lambda$, and $\Omega$ are given in Appendix~\ref{least-squares}.  Once a Monte-Carlo draw from the regression model distribution has been generated, it is accepted only if it satisfies the parameter restrictions.  This rejection algorithm provides a quick and simple method for generating samples from the true Euler-Maruyama posterior distribution at level $k = 0$.  For the daily frequency of observations $\d t = 1/252$ considered in this study, a more computationally intensive MCMC analysis revealed that $k = 0$ produced similar inferential results as for higher levels $k \ge 1$.  Thus, the approximate posteriors at $k = 0$ appear to be a very accurate proxy for those of the true fCIR model.

For each of the 1000 subsets of the Treasury Bill Interest Rate data, the posterior mean and 95\% credible intervals of the Euler-Maruyama approximation $\hat p_0(H \| X_\obs)$ are plotted in Figure~\ref{fcir-tbill}b (blue solid and dotted lines).  These are aligned on the x-axis with the last day in the subset which was used to fit the fCIR model.  For comparison, an fOU model is also fit to each of the 1000 datasets, after transforming the interest rates to the log scale (red lines).

Both models have very similar posterior distributions of $H$, generally situated in the range of 0.55-0.65.  Interestingly, the two models starkly diverge in their findings as of the onset of the Global Financial Crisis beginning in 2007.  As of this point, the fCIR model reports a larg positive correlation in the noise, whereas the fOU model picks up a large negative correlation.  This is because the interested rates, which nearly drop to zero after the market crash, appear to fluctuate considerably when taken on the log scale.  These fluctuations translate to negatively correlated noise when fit by the fOU model.  On the other hand, these same near-zero interest rates are almost constant on the regular scale, the exhibiting positive correlation from the fCIR model's perspective.


\section{Discussion}
This article presents a likelihood-based framework for doing parameter inference
for SDEs driven by fractional Brownian motion.  Central to the framework is a simple discretization scheme which carries over much of the statistical insight from the white noise case.  While this paper focuses on SDEs driven by fractional Brownian motion, the methodology can be applied to other types of driving noise as well.  There are several other natural extensions of this work, a few of which are mentioned below.

On the theoretical side, there are two recent papers \cite{hairer-pillai11, hairer-pillai11b} which show the ergodicity of SDEs given driven fBM. Using these results, it might be plausible to show the posterior consistency of the parameters for the above procedure. For continuous data, a lot of details are worked out in \cite{prakasarao11}. However, for discretely observed data, not much is known \cite{saussereau12}.


An important line of research is to find methods for speeding up the computational time by using more sophisticated algorithms for posterior sampling.  In the HMC sampler described in Section~\ref{sampling-alg}, the computational bottleneck is the inner product involved in the calculating the density of the fBM increments.  While the Durbin-Levinson algorithm is $O(2^k)$, a number of ``superfast'' algorithms scaling polynomially in $k$ have recently been proposed~\citep{ammar-gragg88, stewart03, chandrasekeran-et-al07}.  Alternatively, the Girsanov transformation of~\cite{norros99} can be used to construct MCMC algorithms whose mixing time does not deteriorate with $k$.  Using  this, it will be of interest to develop an inference scheme adapting the methods developed in \cite{beskos-et-al06b}.

There are several non-trivial challenges for extending this work to the multivariate case.  While a solution to the non-white noise SDE has been pioneered by the ``rough paths'' approach of T.\ Lyons~\citep{lyons-quian02, papavasiliou-ladroue11}, most multivariate SDEs cannot be transformed to have constant diffusion, and thus higher-order discretization schemes would be required.  While these are commonly used for simulation in the diffusion literature, it is unclear whether these schemes yield a closed-form density which in turn can be used to construct the likelihood.  It should be noted that a non-constant diffusion poses some restriction on the viability of the Girsanov transformation for multivariate SDEs as well.




\section{acknowledgements}
The authors thank Martin Hairer, Samuel Kou, Jonathan Mattingly, Ivan Nourdin, Tessy Papavasiliou, Gareth Roberts, Andrew Stuart, Sami Tindel, Robert Wolpert for many helpful conversations. NSP is partially supported by the NSF grant DMS-1107070.

\appendix



%

\section{Autocorrelation of the fOU Process}\label{fOU-acfex}

In~\citep[Remark 2.4]{cheridito-et-al03}, the autocorrelation of the fOU process is given by the inverse Fourier transform,
\[
\g(t) = \s^2 \Gamma(2H+1)\sin(\pi H)\intii e^{2\pi i t \xi} \frac{\abs{2\pi\xi}^{1-2H}}{\g^2 + (2\pi\xi)^2}.
\]
For $H > \tfrac 1 2$, we have
\[
\F_t^{-1}\left\{\Gamma(2H+1)\sin(\pi H) \abs{2\pi \xi}^{1-2H}\right\} = H(2H-1) \abs{t}^{2H-2}
\]
and
\[
\F_t^{-1}\left\{\frac 1 {\g^2 + (2\pi \xi)^2} \right\} = \frac{e^{-\g\abs{t}}}{2\g},
\]
such that the convolution property of the Fourier transform gives
\[
\g(t) = \frac{\s^2 H(2H-1)}{2\g} \intii e^{-\g\abs{t-u}} \abs{u}^{2H-2} \ud u.
\]
The indefinite integral can be reduced to a definite integral,
\[
\g(t) = \frac{\s^2H(2H-1)}{2\g} \left\{\frac{e^{-\g\abs{t}}\Gamma(2H-1) + e^{\g\abs{t}} \Gamma(2H-1, \abs t)}{\g^{2H-1}} + e^{-\g \abs t} \int_0^t e^{\g u} u^{2H-2} \ud u\right\},
\]
where $\Gamma(\a)$ and $\Gamma(\a, t)$ denote the complete and (upper) incomplete Gamma functions.  We found that this transformation considerably improved numerical stability for $\abs t > 50$ and $H > .6$.

As for $0 < H < \tfrac 1 2$, in this case we have
\[
\F_t^{-1}\left\{\abs{2\pi \xi}^{1-2H} \right\} = \frac{-\Gamma(2-2H)\cos(\pi H)}{\pi} \abs{t}^{2H-2}.
\]
However, the Rieman integral
\[
\intii e^{-\g\abs {t-u}} \abs{u}^{2H-2} \ud u
\]
is infinite because of the behavior as $u \to 0$.  On the other hand, the inverse Fourier transform is usually approximated numerically by its discrete counterpart.  However, we found that for $H < .4$, the roundoff error was still quite significant with $2^{24} \approx 17$ million evaluation points, which, even using the FFT algorithm, was an order of magnitude slower than the our highest order Euler-Maruyama likelihood approximation ($k = 3$).

\section{Approximate Euler Density for the fOU process}\label{fOU-acfint}

Suppose that $X_{(k)} = (\rv [0,k] X {k,N_k})$ are the level $k$ complete data observations of an fOU process,
\[
\ud X_t = -\g(X_t - \mu) \ud t + \s \ud B_t^H.
\]
The Euler-Maruyama complete data density is determined by the recursion
\[
X_{k,n+1} = X_{k,n} - \g(X_{k,n} - \mu) \d t_k + \s \d B^H_{k,n},
\]
where $\rv [k,0] {\d B^H} {k,N_k-1}$ is a mean-zero stationary Gaussian process.  By linearity, it follows that $X_{(k)} = (\rv [k,0] X {k,N_k})$ is also Gaussian.  That is, dropping the subscript $k$, we have
\[
\begin{pmatrix} X_{1} \\ X_{2} \\ \vdots \\ X_{N} \end{pmatrix} = \begin{pmatrix} \vp_0			&  0  & \cdots & 0 \\
																																							\vp_1			& \vp_0   & \ddots & \vdots \\
																																							\vdots 			& \ddots  & \ddots & 0 \\
																																							\vp_{N-1}	& \vp_{N-2}	& \cdots & \vp_0
																															\end{pmatrix} \begin{pmatrix} \s \d B^H_0 \\ \s \d B^H_1 \\ \vdots \\ \s \d B^H_{N-1} \end{pmatrix} + 
																															\begin{pmatrix} \vp_1 \\ \vp_2 \\ \vdots \\ \vp_N \end{pmatrix} X_0 + \g\mu \d t,
\]
where 
$\vp_0 = 1$ and $\vp_n = -\g \d t \vp_{n-1}$.  In matrix form, this equation becomes
\[
X = \s A \d B^H + b,
\]
such that
\[
X \sim \N(b, \s^2 A V A'),
\]
where $V$ is the Toeplitz variance matrix of the fBM increments given in~\eqref{fbmacf}.  The mean and variance matrix in this representation lead to a Gaussian density,
\[
p(X_{k,1}, \ldots, X_{k,N_k} \| X_{k,0}, \g, \mu, \s, H).
\]
The level $k$ approximation to the observed data density,
\[
p(X_1, \ldots, X_n \| X_0, \g, \mu, \s, H),
\]
is also Gaussian and can be obtained by selecting the appropriate elements of $b$ and $\s^2 A V' A$.  Note that for large $N_k = 2^k N$, the matrix multiplication $A V' A$ can be efficiently computed by embedding the Toeplitz matrix $V_{N_k\times N_k}$ with first row $(v_0, \ldots, v_{N_k-1})$ into a circulant matrix $C_{(2N_k-2)\times (2N_k-2)}$, with first row given by
\[
(v_0, \ldots, v_{N_k-1}, v_{N_k-2}, \ldots, v_1).
\]
Inner products involving $C$ can easily be calculated since $C$ is diagonalizable by the discrete Fourier transform matrix $\F$.  By padding $A$ with the appropriate number of zeros,
\[
A_{N_k\times N_k} \to B = \begin{pmatrix} A & 0_{N_k \times (N_k-2)}\end{pmatrix},
\]
we have
\begin{equation}\label{fftquad}
A V A' = B C B' = B_\F D B_\F^\dagger,
\end{equation}
where $D = \F C \F^{-1}$ is a diagonal (complex-valued) matrix, and
\[
B_\F = B \F_{2N-2}^{-1} = (\F_{2N-2} B')^\dagger,
\]
where $\dagger$ denotes the conjugate transpose.  This method of taking inner products produces a considerable acceleration over the direct approach for $N_k > 1000$. 


\section{Marginal and Conditional Posterior Distributions for Regression-Type Models}\label{least-squares}

Suppose that the likelihood function of a given model can be written in a form where the data $Y = (\rv y n)'$ is subject to the linear regression model
\begin{equation}\label{modeluni}
Y \| \b, \s, \t \sim \N_n(X \b, \s^2 V),
\end{equation}
where $X_{n\times d} = X(\t)$ and $V_{n\times n} = V(\t)$ are functions of $\t$.  This is the case for both the fOU and fCIR Euler approximations at $k = 0$, as shown in~\eqref{fouls} and~\eqref{fcirls}.

For given $\t$, consider the block matrix
\[
[Y X]' V^{-1}[Y X] = R_{d+1\times d+1} = \begin{pmatrix} s & U' \\ U & T\end{pmatrix},
\]
where $s_{1\times 1} = s(\t)$, $U_{d\times 1} = U(\t)$, and $T_{d\times d} = T(\t)$ all depend on $\t$.  Using these quantities, the log-likelihood function for the model in~\eqref{modeluni} can be written as
\[
\begin{split}
l(\b, \s, \t \| Y) & = -\frac 1 2 \left(\frac{(Y-X\b)'\inv V (Y-X\b)}{
\s^2} + n \log(\s^2) + \log(\abs{V}) \right) \\
& = - \frac 1 2 \left(\frac{(\b -\hat\b)' T (\b - \hat\b) + S}{\s^2} + n \log(\s^2) + \log(\abs{V}) \right),
\end{split}
\]
where $\hat \b = \inv T U$ and $S = s - U' \hat \b$.  The conjugate prior $\pi(\b,\s,\t)$ for this model is of the form
\begin{equation}\label{conjprior}
\begin{split}
\t & \sim \pi(\t), \\
\s^2 \| \t & \sim \textrm{Inv-Gamma}(\a, \g) \propto (\s^2)^{-\a-1} \exp(-\g/\s^2), \\
\b \| \s, \t & \sim \N_d(\l, \s^2\inv \Omega).
\end{split}
\end{equation}
This results in the posterior distribution
\begin{align*}
\t \| Y & \sim \frac{\pi(\t)} {\left((\hat \g + \g)^{2\a+n}\abs{T+\Omega}\abs{V}\right)^{1/2}} \\
\s^2 \| \t, Y & \sim \textrm{Inv-Gamma}\left(\a + n/2, \g + \hat \g \right) \\
\b \| \s, \t, Y & \sim \N_d\left(\hat \l, \s^2(T + \Omega)^{-1}\right),
\end{align*}
where
\begin{align*}
\hat \l & = (T+\Omega)^{-1}(U + \Omega \l), \\
\hat \g & = \tfrac 1 2 \big(s + \l'\Omega\l - \hat \l'(T+\Omega)\hat \l\big).
\end{align*}
Note that the popular noninformative prior
\[
\pi(\b, \s^2, \t) \propto \pi(\t)/\s^2
\]
is also a member of the conjugate family~\eqref{conjprior}.  For this noninformative prior, the posterior parameter distribution simplifies to
\begin{align*}
\t \| Y & \sim \frac{\pi(\t)}{(\hat \g^{n-d} \abs{T}\abs{V})^{1/2}} \\
\s^2 \| \t, Y & \sim \textrm{Inv-Gamma}((n-d)/2, \hat \g) \\
\b \| \s, \t, Y & \sim \N_d(\hat \b, \s^2T^{-1}),
\end{align*}
where $\hat \g = \tfrac 1 2 s$.


\bibliographystyle{numref}
\bibliography{fbm-ref}

\begin{thebibliography}{10}
\expandafter\ifx\csname urlstyle\endcsname\relax
  \providecommand{\doi}[1]{doi:\discretionary{}{}{}#1}\else
  \providecommand{\doi}{doi:\discretionary{}{}{}\begingroup
  \urlstyle{rm}\Url}\fi

\bibitem{ammar-gragg88}
Ammar, G.S. and Gragg, W.B. (1988).
\newblock Superfast solution of real positive definite {Toeplitz} systems.
\newblock \emph{SIAM Journal on Matrix Analysis and Applications}, 9: 61--76.

\bibitem{ashby-lieber04}
Ashby, P.D. and Lieber, C.M. (2004).
\newblock Brownian force profile reconstruction of interfacial 1-nonanol
  solvent structure.
\newblock \emph{Journal of the American Chemical Society}, 126(51):
  16973--16980.

\bibitem{beskos-et-al12}
Beskos, A., Kalogeropoulous, K., and Pazos, E. (2012).
\newblock Advanced {MCMC} methods for sampling on diffusion pathspace.
\newblock ArXiv:1203.6216.

\bibitem{beskos-et-al06b}
Beskos, A., Papaspiliopoulos, O., Roberts, G.O., and Fearnhead, P. (2006).
\newblock Exact and computationally efficient likelihood-based estimation for
  discretely observed diffusion processes (with discussion).
\newblock \emph{Journal of the Royal Statistical Society: Series B (Statistical
  Methodology)}, 68(3): 333--382.

\bibitem{brockwell-davis09}
Brockwell, P.J. and Davis, R.A. (2009).
\newblock \emph{Time Series: {Theory} and Methods}.
\newblock Springer, New York, 2nd, revised edition.

\bibitem{cano-et-al06}
Cano, J.A., Kessler, M., and Salmer\'on, D. (2006).
\newblock Approximation of the posterior density for diffusion processes.
\newblock \emph{Statistics and Probability Letters}, 76: 39 -- 44.

\bibitem{chandrasekeran-et-al07}
Chandrasekeran, S., Gu, M., Sun, X., Xia, J., and Zhu, J. (2007).
\newblock A superfast algorithm for {Toeplitz} systems of linear equations.
\newblock \emph{SIAM Journal on Matrix Analysis and Applications}, 29: 1247 --
  1266.

\bibitem{cheridito03}
Cheridito, P. (2003).
\newblock Arbitrage in fractional {Brownian} motion models.
\newblock \emph{Finance and Stochastics}, 7: 533 -- 553.

\bibitem{cheridito-et-al03}
Cheridito, P., Kawaguchi, H., and Maejima, M. (2003).
\newblock Fractional {Ornstein-Uhlenbeck} processes.
\newblock \emph{Electronic Journal of Probability}, 8: 1--14.

\bibitem{chib-et-al10}
Chib, S., Pitt, M., and Shephard, N. (2010).
\newblock Likelihood based inference for diffusion driven state space.
\newblock Http://apps.olin.wustl.edu/faculty/chib/techrep/sde.pdf.

\bibitem{chronopoulou-tindel13}
Chronopoulou, A. and Tindel, S. (2013).
\newblock On inference for fractional differential equations.
\newblock \emph{Statistical Inference for Stochastic Processes}, 16(1): 29--61.

\bibitem{cox84}
Cox, D.R. (1984).
\newblock Long-range dependence: {A} review.
\newblock In H.H.A. David and H.H.T. David (eds.), \emph{Statistics, an
  Appraisal: Proceedings of a Conference Markinv the 50th Anniversary of the
  Statistical Laboratory, Iowa State University}, 57 -- 74.

\bibitem{cox-et-al85}
Cox, J.C., Ingersoll, J.E., and Ross, S.A. (1985).
\newblock A theory of the term structure of interest rates.
\newblock \emph{Econometrica}, 53(2): 385 -- 408.

\bibitem{dai-heyde96}
Dai, W. and Heyde, C.C. (1996).
\newblock It{\=o}'s formula with respect to fractional {Brownian} motion and
  its application.
\newblock \emph{Journal of Applied Mathematics and Stochastic Analysis}, 9(4):
  439--448.

\bibitem{deya-et-al12}
Deya, A., Neuenkirch, A., and Tindel, S. (2012).
\newblock A {Milstein}-type scheme without {L\'evy} area terms for {SDEs}
  driven by fractional {Brownian} motion.
\newblock \emph{Annales de l'Institut Henri Poincar{\'e}, Probabilit{\'e}s et
  Statistiques}, 48(2): 518--550.

\bibitem{duane-et-al87}
Duane, S., Kennedy, A.D., Pendleton, B.J., and Roweth, D. (1987).
\newblock Hybrid {Monte} {Carlo}.
\newblock \emph{Physics Letters B}, 195(2): 216 -- 222.

\bibitem{ishwaran99}
Duane, S., Kennedy, A.D., Pendleton, B.J., and Roweth, D. (1999).
\newblock Applications of {Hybrid Monte Carlo} to {Bayesian} generalized linear
  models: {Quasicomplete} separation and neural networks.
\newblock \emph{Journal of Computational and Graphical Statistics}, 8(4): 779
  -- 799.

\bibitem{durbin60}
Durbin, J. (1960).
\newblock The fitting of time series models.
\newblock \emph{Review of the International Statistical Institute}, 28: 233 --
  243.

\bibitem{durham-gallant02}
Durham, G.B. and Gallant, A.R. (2002).
\newblock Numerical techniques for maximum likelihood estimation of
  continuous-time diffusion processes.
\newblock \emph{Journal of Business and Economic Statistics}, 20(3): 297--338.

\bibitem{embrechts-maejima02}
Embrechts, P. and Maejima, M. (2002).
\newblock \emph{{Self-Similar Processes}}.
\newblock Princeton University Press, Princeton.

\bibitem{eraker01}
Eraker, B. (2001).
\newblock Mcmc analysis of diffusion models with application to {Finance}.
\newblock \emph{Journal of Business and Economic Statistics}, 19(2): 177--191.

\bibitem{fink-kluppelberg11}
Fink, H. and Kl\"uppelberg, C. (2011).
\newblock Fractional {L\'evy}-driven {Ornstein-Uhlenbeck} processes and
  stochastic differential equations.
\newblock \emph{Bernoulli}, 17(1): 484 -- 506.

\bibitem{gillespie07}
Gillespie, D.T. (2007).
\newblock Stochastic simulation of chemical kinetics.
\newblock \emph{Annual Review of Physical Chemistry}, 58: 35--56.

\bibitem{girolami-calderhead11}
Girolami, M. and Calderhead, B. (2011).
\newblock Riemann manifold {Langevin} and {Hamiltonian} {Monte} {Carlo}
  methods.
\newblock \emph{Journal of the Royal Statistical Society: Series B (Statistical
  Methodology)}, 73: 123 -- 214.

\bibitem{golightly-wilkinson05}
Golightly, A. and Wilkinson, D.J. (2005).
\newblock Bayesian inference for stochastic kinetic models using a diffusion
  approximation.
\newblock \emph{Biometrics}, 61: 781 -- 788.

\bibitem{golightly-wilkinson08}
Golightly, A. and Wilkinson, D.J. (2008).
\newblock Bayesian inference for nonlinear multivariate diffusion models
  observed with error.
\newblock \emph{Computational Statistics and Data Analysis}, 52: 1674 -- 1693.

\bibitem{hairer-pillai11}
Hairer, M. and Pillai, N.S. (2011).
\newblock Ergodicity of hypoelliptic {SDEs} driven by fractional {Brownian}
  motion.
\newblock \emph{Annales de l'Institut Henri Poincar{\'e}, Probabilit{\'e}s et
  Statistiques}, 47(2): 601--628.

\bibitem{hairer-pillai11b}
Hairer, M. and Pillai, N.S. (2011).
\newblock Regularity of laws and ergodicity of hypoelliptic {SDEs} driven by
  rough paths.
\newblock ArXiv:1104.5218.

\bibitem{hanggi-jung95}
H\"anggi, P. and Jung, P. (1995).
\newblock Colored noise in dynamical systems.
\newblock \emph{Advances in Chemical Physics}, 89: 239 -- 326.

\bibitem{heston93}
Heston, S.L. (1993).
\newblock A closed-form solution for options with stochastic volatility with
  applications to bond and currency options.
\newblock \emph{Review of Financial Studies}, 6(2): 327--343.

\bibitem{hu-yan09}
Hu, Y. and Yan, J. (2009).
\newblock Wick calculus for nonlinear {Gaussian} functionals.
\newblock \emph{Acta Mathematicae Applicatae Sinica (English Series)}, 25(3):
  399--414.

\bibitem{hull-white01}
Hull, J. and White, A. (1990).
\newblock Pricing interest-rate derivative securities.
\newblock \emph{Review of Financial Studies}, 3(4): 573--592.

\bibitem{hult03}
Hult, H. (2003).
\newblock Approximating some {Volterra} type stochastic integrals with
  applications to parameter estimation.
\newblock \emph{Stochastic Processes and Their Applications}, 105(1): 1--32.

\bibitem{kalogeropoulos-et-al10}
Kalogeropoulos, K., Roberts, G.O., and Dellaporta, P. (2010).
\newblock {Inference for Stochastic Volatility Models Using Time Change
  Transformations}.
\newblock \emph{The Annals of Statistics}, 38(2): 784 -- 807.

\bibitem{karlin-taylor81}
Karlin, S. and Taylor, H.M. (1981).
\newblock \emph{A Second Course In Stochastic Processes}.
\newblock Academic Press, San Diego.

\bibitem{kaur-et-al11}
Kaur, I., Ramanathan, T.V., and Naik-Nimbalkar, U.V. (2011).
\newblock Parameter estimation of a process driven by fractional {Brownian}
  motion: {An} estimating function approach.
\newblock Technical report, University of Pune, India.

\bibitem{kleptsyna-lebreton02}
Kleptsyna, M.L. and Le~Breton, A. (2002).
\newblock Statistical analysis of the fractional {Ornstein-Uhlenbeck} type
  process.
\newblock \emph{Statistical Inference for Stochastic Processes}, 5(3):
  229--248.

\bibitem{kleptsyna-et-al00}
Kleptsyna, M.L., Le~Breton, A., and Roubaud, M.C. (2000).
\newblock Parameter estimation and optimal filtering for fractional type
  stochastic systems.
\newblock \emph{Statistical Inference for Stochastic Processes}, 3: 173--182.

\bibitem{kou-et-al12}
Kou, S.C., Olding, B.P., Lysy, M., and Liu, J.S. (2012).
\newblock A multiresolution method for parameter estimation of diffusion
  processes.
\newblock \emph{Journal of the American Statistical Association}, 107(500):
  1558--1574.

\bibitem{kou-xie04}
Kou, S.C. and Xie, X.S. (2004).
\newblock Generalized {Langevin} equation with fractional {Gaussian} noise:
  {Subdiffusion} within a single protein molecule.
\newblock \emph{Physical Review Letters}, 93(18): 180603 1--4.

\bibitem{lebreton-roubaud00}
Le~Breton, A. and Roubaud, M.C. (2000).
\newblock General approach to filtering with fractional {Brownian} noises --
  {Application} to linear systems.
\newblock \emph{Stochastics: An International Journal of Probability and
  Stochastic Processes}, 71: 119--140.

\bibitem{levinson47}
Levinson, N. (1947).
\newblock The {Wiener} {RMS} error criterion in filter design and prediction.
\newblock \emph{Journal of Mathematical Physics}, 25: 261 -- 278.

\bibitem{liu01}
Liu, J.S. (2001).
\newblock \emph{Monte {Carlo} strategies in scientific computing}.
\newblock Springer Verlag, New York.

\bibitem{liu-sabatti00}
Liu, J.S. and Sabatti, C. (2000).
\newblock Generalised {Gibbs} sampler and multigrid {Monte Carlo} for
  {Bayesian} computation.
\newblock \emph{Biometrika}, 87(2): 353.

\bibitem{lyons-quian02}
Lyons, T.T.J. and Qian, Z. (2002).
\newblock \emph{System Control and Rough Paths}.
\newblock Oxford University Press, Oxford.

\bibitem{mandelbrot-wallis68}
Mandelbrot, B.B. and Wallis, J.R. (1968).
\newblock {Noah}, {Joseph}, and operational hydrology.
\newblock \emph{Water Resources Research}, 4: 909 -- 918.

\bibitem{maruyama55}
Maruyama, G. (1955).
\newblock Continuous {Markov} processes and stochastic equations.
\newblock \emph{Rendiconti del Circolo Matematico di Palermo}, 4(1): 48--90.

\bibitem{mishura-shevchenko08}
Mishura, Y. and Shevchenko, G. (2008).
\newblock The rate of convergence for {Euler} approximations of solutions of
  stochastic differential equations driven by fractional {Brownian} motion.
\newblock \emph{Stochastics: An International Journal of Probability and
  Stochastic Processes}, 80(5): 489--511.

\bibitem{neal10}
Neal, R.M. (2011).
\newblock {MCMC} using {Hamiltonian} dynamics.
\newblock In \emph{Handbook of Markov Chain Monte Carlo}, 113 -- 162. Chapman
  \& Hall / CRC Press.

\bibitem{neuenkirch-nourdin07}
Neuenkirch, A. and Nourdin, I. (2007).
\newblock Exact rate of convergence of some approximation schemes associated to
  {SDEs} driven by a fractional {Brownian} motion.
\newblock \emph{Journal of Theoretical Probability}, 20(4): 871--899.

\bibitem{neuenkirch-tindel11}
Neuenkirch, A. and Tindel, S. (2011).
\newblock A least square-type procedure for parameter estimation in stochastic
  differential equations with additive fractional noise.
\newblock ArXiv:1111.1816.

\bibitem{neuenkirch-et-al10}
Neuenkirch, A., Tindel, S., and Unterberger, J. (2010).
\newblock Discretizing the fractional {L\'evy} area.
\newblock \emph{Stochastic Processes and Their Applications}, 120(2): 223--254.

\bibitem{norros99}
Norros, I., Valkeila, E., and Virtamo, J. (1999).
\newblock An elementary approach to a {Girsanov} formula and other analytical
  results on fractional {Brownian} motions.
\newblock \emph{Bernoulli}, 5(4): 571--587.

\bibitem{nourdin08}
Nourdin, I. (2008).
\newblock A simple theory for the study of {SDEs} driven by a fractional
  {Brownian} motion, in dimension one.
\newblock In \emph{S{\'e}minaire de probabilit{\'e}s XLI}, 181--197. Springer.

\bibitem{nourdin-simon06}
Nourdin, I. and Simon, T. (2006).
\newblock On the absolute continuity of one-dimensional {SDEs} driven by a
  fractional {Brownian} motion.
\newblock \emph{Statistics and Probability Letters}, 76(9): 907--912.

\bibitem{oksendal09}
{\O}ksendal, B. (2009).
\newblock Fractional {Brownian} motion in {Finance}.
\newblock \emph{Preprint series. Pure mathematics}.

\bibitem{papavasiliou-ladroue11}
Papavasiliou, A. and Ladroue, C. (2011).
\newblock Parameter estimation for rough differential equations.
\newblock \emph{The Annals of Statistics}, 39(4): 2047--2073.

\bibitem{pardoux-pignol84}
Pardoux, E. and Pignol, M. (1984).
\newblock Etude de la stabilit\'e de la solution d'une e.d.s bilin\'eaire \`a
  co\'efficients p\'eriodiques. {Application} au mouvement des pales
  d'h\'elicop\`ere.
\newblock In \emph{Analysis and Optimization of Systems}, volume~63 of
  \emph{Lecture Notes in Control and Information Sciences}, 92--103. Springer
  Berlin / Heidelberg.

\bibitem{pedersen95}
Pedersen, A.R. (1995).
\newblock A new approach to maximum likelihood estimation for stochastic
  differential equations based on discrete observations.
\newblock \emph{Scandinavian Journal of Statistics}, 22(1): 55--71.

\bibitem{prakasarao04}
Prakasa~Rao, B.L.S. (2004).
\newblock Sequential estimation for fractional {Ornstein-Uhlenbeck} type
  process.
\newblock \emph{Sequential Analysis}, 23(1): 33--44.

\bibitem{prakasarao05}
Prakasa~Rao, B.L.S. (2005).
\newblock Minimum {$L_1$}-norm estimation for fractional {Ornstein-Uhlenbeck}
  type process.
\newblock \emph{Theory of Probability and Mathematical Statistics}, 71:
  181--189.

\bibitem{prakasarao11}
Prakasa~Rao, B.L.S. (2011).
\newblock \emph{Statistical inference for fractional diffusion processes}.
\newblock Wiley \& Sons, Chichester.

\bibitem{roberts-stramer01}
Roberts, G.O. and Stramer, O. (2001).
\newblock On inference for partially observed nonlinear diffusion models using
  the {Metropolis-Hastings} algorithm.
\newblock \emph{Biometrika}, 88(3): 603--621.

\bibitem{samorodnitsky06}
Samorodnitsky, G. (2006).
\newblock Long range dependence.
\newblock \emph{Foundations and Trends in Stochastic Systems}, 1: 163 -- 257.

\bibitem{saussereau11}
Saussereau, B. (2011).
\newblock Nonparametric inference for fractional diffusion.
\newblock ArXiv:1111.0446.

\bibitem{saussereau12}
Saussereau, B. (2012).
\newblock Transportation inequalities for stochastic differential equations
  driven by a fractional {Brownian} motion.
\newblock \emph{Bernoulli}, 18(1): 1--23.

\bibitem{sobczyk01}
Sobczyk, K. (2001).
\newblock \emph{Stochastic Differential Equations with Applications to
  {Physics} and {Engineering}}.
\newblock Kluwer Academic Publishers, Norwell.

\bibitem{stewart03}
Stewart, M. (2003).
\newblock A superfast {Toeplitz} solver with improved numerical stability.
\newblock \emph{SIAM Journal on Matrix Analysis and Applications}, 25:
  669--693.

\bibitem{stoev-et-al05}
Stoev, S., Taqqu, M.S., Park, C., and Marron, J.S. (2005).
\newblock On the wavelet spectrum diagnostic for {Hurst} parameter estimation
  in the analysis of {Internet} traffic.
\newblock \emph{Computer Networks}, 48: 423 -- 445.

\bibitem{sussmann78}
Sussmann, H.J. (1978).
\newblock On the gap between deterministic and stochastic ordinary differential
  equations.
\newblock \emph{The Annals of Probability}, 6(1): 19--41.

\bibitem{tudor-viens07}
Tudor, C.A. and Viens, F.G. (2007).
\newblock Statistical aspects of the fractional stochastic calculus.
\newblock \emph{The Annals of Statistics}, 35: 1183 -- 1212.

\bibitem{vankampen82}
van Kampen, N.G. (1982).
\newblock The diffusion approximation for {Markov} processes.
\newblock In I.~Lamprecht and A.I. Zotin (eds.), \emph{Thermodynamics and
  Kinetics of Biological Processes}, 181 -- 195. Walter de Gruyter \& Co., New
  York.

\bibitem{whitmore95}
Whitmore, G.A. (1995).
\newblock Estimating degradation by a {Weiner} diffusion process subject to
  measurement error.
\newblock \emph{Lifetime Data Analysis}, 1: 307--319.

\bibitem{wolpert-taqqu05}
Wolpert, R.L. and Taqqu, M.S. (2005).
\newblock Fractional {Ornstein-Uhlenbeck} {L\'evy} processes and the {Telecom}
  process: {Upstairs} and downstairs.
\newblock \emph{Signal Processing}, 85(8): 1523--1545.

\bibitem{xiao-et-al11}
Xiao, W., Zhang, W., and Xu, W. (2011).
\newblock Parameter estimation for fractional {Ornstein-Uhlenbeck} processes at
  discrete observation.
\newblock \emph{Applied Mathematical Modelling}, 35(9): 4196--4207.

\bibitem{young36}
Young, L.C. (1936).
\newblock An inequality of the {H\"older} type, connected with {Stieltjes}
  integration.
\newblock \emph{Acta Mathematica}, 67(1): 251--282.

\bibitem{zindewalsh-phillips03}
Zinde-Walsh, V. and Phillips, P.C.B. (2003).
\newblock Fractional {Brownian} motion as a differentiable generalized
  {Gaussian} process.
\newblock In \emph{Probability, Statistics and their Applications: Papers in
  Honor of Rabi Bhattacharya}, volume~41, 285 -- 291. Institute of Mathematical
  Statistics.

\end{thebibliography}

\end{document}